\newtheorem{theorem}{Theorem}
\newtheorem{lemma}{Lemma}
\newtheorem{obs}{Observation}
\newtheorem{cl}{Claim}
\newtheorem{fact}{Fact}
\newcommand{\satisfying}[1]{\ensuremath{\mathsf{Sol}({#1})}}
\newcommand{\F}{\ensuremath{\mathsf{F}}}
\newcommand{\numVars}{\ensuremath{\mathsf{n}}}
\newcommand{\numCubes}{\ensuremath{\mathsf{m}}}
\newcommand{\cube}[1]{\F^{#1}}
\newcommand{\width}[1]{\ensuremath{\mathsf{width}(\cube{#1})}}
\newcommand{\toolname}{\ensuremath{\mathsf{pepin}}}
\newcommand{\oldtoolname}{\ensuremath{\mathsf{pepinBinomial}}}
\newcommand{\ganak}{\ensuremath{\mathsf{GANAK}}}
\newcommand{\klm}{\ensuremath{\mathsf{KLM Counter}}}
\newcommand{\dklr}{\ensuremath{\mathsf{DKLR Counter}}}
\newcommand{\dnfapproxmc}{\ensuremath{\mathsf{DNFApproxMC}}}
\newcommand{\eps}{\varepsilon}
\newcommand{\ConstructLazySample}{\ensuremath{\mathsf{ConstructLazySample}}}
\newcommand{\Threshold}{\mathsf{Thresh}}
\newcommand{\Bin}{\mathsf{Binomial}}
\newcommand{\pois}{\mathsf{Poisson}}
\newcommand{\GenerateSamples}{\ensuremath{\mathsf{GenerateSamples}}}
\newcommand{\rpo}{\ensuremath{\mathsf{Process1}}}
\newcommand{\rpt}{\ensuremath{\mathsf{Process2}}}
\newcommand{\rpf}{\ensuremath{\mathsf{Process3}}}
\algnewcommand{\IfThenElse}[3]{%
  \State \algorithmicif\ #1\ \algorithmicthen\ #2\ \algorithmicelse\ #3}
\newcommand{\MVC}{\ensuremath{\mathsf{MVC}}}
\begin{document}

\title{Engineering an Efficient Approximate DNF-Counter}

\ifdoubleblind
  \author{Author(s)}
\else
\author{Mate Soos, Uddalok Sarkar, Divesh Aggarwal, Sourav Chakraborty, Kuldeep S. Meel, Maciej Obremski}

\address[Mate Soos, Kuldeep S. Meel]{University of Toronto, Toronto, Canada. \textnormal{E-mail: \url{soos.mate@gmail.com}, \url{meel@cs.toronto.edu }}}
\address[Divesh Aggarwal, Maciej Obremski]{National University of Singapore, Singapore. \textnormal{E-mail: \url{divesh@comp.nus.edu.sg}, \url{obremski.math@gmail.com}}}
\address[Uddalok Sarkar, Sourav Chakraborty]{Indian Statistical Institute, Kolkata, India. \textnormal{E-mail: \url{chakraborty.sourav@gmail.com }}}
\fi

\begin{abstract}
Model counting is a fundamental problem in many practical applications, including query evaluation in probabilistic databases and failure-probability estimation of
networks. In this work, we focus on a variant of this problem where the underlying 
formula is expressed in the Disjunctive Normal Form (DNF), also known as \#DNF. This problem has been shown to be \#P-complete, making it often intractable to solve exactly. Much research has therefore focused on obtaining approximate solutions, particularly in the form of $(\varepsilon, \delta)$ approximations.

The primary contribution of this paper is a new approach, called {\toolname}, an approximate \#DNF counter that significantly outperforms prior state-of-the-art approaches. Our work is based on the recent breakthrough in the context of the union of sets in the streaming model. We demonstrate the effectiveness of our approach through extensive experiments and show that it provides an affirmative answer to the challenge of efficiently computing \#DNF.
\end{abstract}

\maketitle

\section{Introduction}\label{sec:introduction}
The problem of model counting is fundamental in computer science, where one seeks to compute the total number of solutions to a given set of constraints. In this work, we focus on a variant of this problem where the underlying formula is expressed in Disjunctive Normal Form (DNF), also known as \#DNF. This problem has many practical applications, including query evaluation in probabilistic databases \cite{DalviS07} and failure-probability estimation of networks \cite{Karger2001}. 
The problem of \#DNF is known to be \#P-complete \cite{Valiant79}, where \#P is the class of counting problems for decision problems in NP. Due to the intractability of exact \#DNF, much of the research has been focused on obtaining approximate solutions, particularly in the form of $(\epsilon, \delta)$ approximations, where the count returned by the approximation scheme is within $(1 + \epsilon)$ of the exact count with confidence at least $1 - \delta$.

There has been a significant amount of research on the problem of
approximate \#DNF counting. Karp and Luby~\cite{KarpLuby1983}
proposed the first Fully Polynomial Randomized Approximation Scheme (FPRAS) for \#DNF, known as the {\ensuremath{\mathsf{KL\ Counter}}}. This
was followed by the {\klm} proposed by Karp, Luby, and
Madras~\cite{KLM89} and the 
{\ensuremath{\mathsf{Vazirani\ Counter}}}
proposed by Vazirani \cite{booksVV}.
More recently, Chakraborty, Meel, and Vardi~\cite{CMV16} showed that the hashing-based framework for approximate CNF counting can be
applied to \#DNF, leading to the {\dnfapproxmc} algorithm. This was
subsequently improved upon by Meel, Shrotri, and Vardi~\cite{MSV17,MSV19}
with the design of {\ensuremath{\mathsf{SymbolicDNFApproxMC}}} algorithm.

Given the plethora of approaches with similar complexity, it is natural to wonder how they compare to each other. Meel, Shrotri, and Vardi~\cite{MSV17,MSV19} conducted an extensive study to answer this question, producing a nuanced picture of the performance of these approaches. They observed that there is no single best algorithm that outperforms all others for all classes of formulas and input parameters. 
These results demonstrate a gap between runtime 
performance and theoretical bounds on the time complexity of techniques for approximate \#DNF, thereby highlighting the room for improvement in the design of FPRAS for \#DNF. In particular, they 
left open the challenge of designing an FPRAS that outperforms every other FPRAS.

The primary contribution of this paper is an affirmative answer to the above challenge. We present a new efficient approximate \#DNF counter, called {\toolname}, with (nearly) optimal time complexity that outperforms all the existing FPRAS algorithms when run on standard benchmark data. Our investigations are motivated by the recent breakthrough by Meel, Vinodchandran, and Chakraborty ~\cite{MVC21} on approximating the volume of the union of sets in the streaming model. However, we found their algorithm to be highly impractical due to its reliance on sampling from the Binomial distribution and runtime overhead arising from requirement of a large amount of randomness.

To overcome these barriers, we first demonstrate that sampling from the Poisson distribution suffices to provide theoretical guarantees. We then propose algorithmic engineering innovations, such as a novel sampling scheme and the use of lazy data sampling to improve runtime performance. These innovations allow us to design {\toolname}, a practically efficient approximate \#DNF counter that outperforms every other FPRAS. In particular, over a benchmark suite of 900 instances, {\toolname} attains a PAR-2 score\footnote{PAR-2 score is a penalized average runtime. It assigns a runtime of two times the time limit for each benchmark the tool timed out on.} of 3.9 seconds while all prior techniques have PAR-2 score of over 150, thereby attaining a 40$\times$ speedup. 

The rest of the paper is organized as follows: we present notations and preliminaries in Section~\ref{sec:prelims}. We then present, in Section~\ref{sec:background}, a detailed overview of the prior approaches in the context of streaming that serve as the inspiration for our approach. We present the primary technical contribution, {\toolname}, in Section~\ref{sec:techical} and present detailed empirical analysis in Section~\ref{sec:experiments}. Finally, we conclude in Section~\ref{sec:conclusion}.
\section{Notations and Preliminaries}\label{sec:prelims}

In this paper, we consider Disjunctive Normal Form (DNF) formulas, which are disjunctions over conjunctions of literals. A literal is a variable or the negation of a variable. The disjuncts in a formula are referred to as {\em cubes}, and we use $\cube{i}$ to denote the $i^{th}$ cube. A formula $\F$ with $\numCubes$ cubes can be represented as $\F = \cube{1} \lor \cube{2} \lor ... \lor \cube{\numCubes}$. We use $\numVars$ to denote the number of variables in the formula. The width of a cube $\cube{i}$ is the number of literals it contains and is denoted by $\width{i}$. 

Throughout this paper, $\log$ means logarithms to the base $2$, and $\ln$ means logarithms to the base $e$.
We use $\Pr[A]$ to denote the probability of an event $A$, and $\mu[Y]$ and $\sigma^2[Y]$ to denote the expectation and variance of a random variable $Y$, respectively. An assignment of truth values to the variables in a formula $\F$ is called a satisfying assignment or witness if it makes $\F$ evaluate to true. The set of all satisfying assignments of $\F$ is denoted by $\satisfying{\F}$. Computing a satisfying assignment, if one exists, can be done in polynomial time for DNF formulas. The constrained counting problem is to compute $|\satisfying{\F}|$.

We say that a randomized algorithm $\mathcal{A}$ is an FPRAS for a problem if, given a formula $\F$, a tolerance parameter $\varepsilon \in (0,1)$, and a confidence parameter $\delta \in (0,1)$, $\mathcal{A}$ outputs a random variable $Y$ such that the probability that $\Pr[\frac{1}{1 + \varepsilon}|\satisfying{\F}| \leq Y \leq (1 + \varepsilon)|\satisfying{\F}|] \geq 1-\delta$, and the running time of the algorithm is polynomial in $|\F|$, $1/\varepsilon$, and $\log(1/\delta)$.

\begin{algorithm}[h]
    \caption{{\rpo}}\label{algo:rp1}
    \begin{algorithmic}[1]
    \State $N \leftarrow \pois(|S| p / 2)$
    \State Let $T$ be a multi-set obtained by drawing $N$ samples from the set $S$ with replacement.
    \end{algorithmic}
\end{algorithm}
        
\begin{algorithm}[h]
    \caption{{\rpt}}\label{algo:rp2}
    \begin{algorithmic}[1]
        \State $N \leftarrow \pois(|S| p)$
        \State Let $T$ be a multi-set obtained by drawing $N$ samples from $S$ with replacement
        \State Throw each of the elements in $T$ independently with probability $1 / 2$
    \end{algorithmic}
\end{algorithm}

\begin{algorithm}[h]
    \caption{{\rpf}}\label{algo:rp4}
    \begin{algorithmic}
    \State Initialize $T \leftarrow \varnothing$
    \For{$i=1$ to $n$}
    \State $N \leftarrow \pois(p / 2)$
        \If{$N \geq 1$}
            $\operatorname{add} N$ copies of $s_i$ to $T$
        \EndIf
    \EndFor
    \end{algorithmic}
\end{algorithm}

\begin{cl}
     Let $X$, $Y$ are two independent Poisson random variables such that $X \gets \pois(\lambda_1)$ and $Y \gets \pois(\lambda_2)$ respectively. Then $(X + Y) \gets \pois(\lambda_1 + \lambda_2)$.
     \label{cl:poissonsum}
\end{cl}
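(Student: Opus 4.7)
The plan is to prove the claim by direct convolution, since the statement is elementary and a self-contained computation is cleaner than invoking moment generating functions. The key identity I need is the binomial theorem, and the main observation is that the product of the two Poisson densities factors neatly into a Gaussian-style exponential times a sum that telescopes via $(\lambda_1 + \lambda_2)^n$.

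Concretely, I would first fix an arbitrary nonnegative integer $n$ and write
\[
\Pr[X + Y = n] = \sum_{k=0}^{n} \Pr[X = k]\,\Pr[Y = n - k],
\]
using independence of $X$ and $Y$ to justify factoring the joint probability. Then I would substitute the Poisson pmf for each factor, pull the common $e^{-(\lambda_1 + \lambda_2)}$ and a factor of $1/n!$ out of the sum, and introduce the binomial coefficient $\binom{n}{k}$ by multiplying and dividing by $n!$ inside the sum.

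At that point the sum reads $\sum_{k=0}^{n} \binom{n}{k} \lambda_1^{k} \lambda_2^{n-k}$, which by the binomial theorem is exactly $(\lambda_1 + \lambda_2)^n$. Assembling the pieces gives
\[
\Pr[X + Y = n] = \frac{e^{-(\lambda_1 + \lambda_2)} (\lambda_1 + \lambda_2)^n}{n!},
\]
which is the Poisson pmf with parameter $\lambda_1 + \lambda_2$, as required.

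There is no real obstacle here: the only subtlety worth stating carefully is that independence is what lets the joint probability split into a product, and that the sum ranges over $k = 0, 1, \dots, n$ precisely because both $X$ and $Y$ are nonnegative integer valued. Everything else is algebra and one application of the binomial theorem.
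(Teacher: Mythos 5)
Your proof is correct. It is the standard convolution argument: condition on the value of $X$, use independence to factor the joint probability, substitute the Poisson pmfs, pull out $e^{-(\lambda_1+\lambda_2)}$ and $1/n!$, and recognize the remaining sum as the binomial expansion of $(\lambda_1+\lambda_2)^n$. The only thing worth flagging is that the paper itself states Claim~\ref{cl:poissonsum} without proof, treating it as a well-known fact and invoking it later (via induction) to establish the equivalence of $\rpo$ and $\rpf$ in the proof of Claim~\ref{cl:poissonrp}; so there is no paper proof to compare against, and your write-up simply supplies the missing elementary justification. One minor stylistic nit: calling $e^{-(\lambda_1+\lambda_2)}$ a ``Gaussian-style exponential'' is a misnomer --- it is just the exponential of the combined rate, not a quadratic --- but this does not affect the correctness of the argument.
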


\begin{cl}
    For any $S = \{s_1, s_2, \ldots , s_n\}$ and $p$ the distributions of the outputs of the three random processes {\rpo}, {\rpt}, and {\rpf} are equivalent. That is the distribution of their outputs are same.
    \label{cl:poissonrp}
\end{cl}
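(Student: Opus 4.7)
The plan is to establish {\rpt} $\equiv$ {\rpo} by Poisson thinning, and {\rpf} $\equiv$ {\rpo} by Poisson splitting, using Claim~\ref{cl:poissonsum} as the workhorse. I would treat each process's output as a distribution on count vectors $(N_1, \ldots, N_n)$, where $N_i$ is the multiplicity of $s_i$ in $T$; this removes any spurious dependence on sample ordering and reduces equivalence of multiset outputs to equality of these joint distributions.

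For {\rpt} versus {\rpo}, I would invoke the standard Poisson-thinning lemma: if $N \sim \pois(|S|p)$ i.i.d.\ uniform draws from $S$ are each retained independently with probability $1/2$, then the retained count $N'$ is $\pois(|S|p/2)$ and, conditional on $N'$, the retained samples are still i.i.d.\ uniform from $S$, independently of $N'$. This matches the law of {\rpo}'s output exactly.

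For {\rpf} versus {\rpo}, I would argue in two steps. \emph{Size matching:} applying Claim~\ref{cl:poissonsum} inductively to the independent $N_i \sim \pois(p/2)$ gives $N := \sum_{i=1}^n N_i \sim \pois(np/2) = \pois(|S|p/2)$, the same total-size distribution as {\rpo}. \emph{Conditional matching:} writing out the joint pmf $\prod_i \frac{e^{-p/2}(p/2)^{k_i}}{k_i!} = \frac{e^{-np/2}(p/2)^N}{\prod_i k_i!}$ and dividing by the $\pois(np/2)$ pmf at $N$, the exponential and rate-dependent factors collapse to $\binom{N}{k_1,\ldots,k_n}(1/n)^N$, i.e.\ the symmetric multinomial. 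But this is precisely the conditional law of the count vector obtained in {\rpo} when $N$ i.i.d.\ uniform samples are drawn from $\{s_1,\ldots,s_n\}$. Combining size and conditional matching yields equality in distribution of the count vectors.

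The main obstacle is the bookkeeping in the conditional-matching step: one has to handle the ordering-versus-multiset issue carefully (each ordered sequence of $N$ samples with multiplicities $k_1,\ldots,k_n$ is counted $\binom{N}{k_1,\ldots,k_n}$ times when collapsing to count vectors), so that the multinomial coefficient appears with the right normalization. Once that is straightened out, the calculation is a routine Poisson-pmf manipulation and the three processes coincide in law.
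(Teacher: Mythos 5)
Your proof is correct and follows the same basic route as the paper (Poisson thinning for {\rpt}, Poisson superposition for {\rpf}), but it is actually more complete than the paper's own argument. The paper's step (1) only computes the pmf of $|T|$ after thinning and concludes equivalence once the size is shown to be $\pois(\lambda/2)$; it never addresses whether, conditional on the size, the retained samples are still i.i.d.\ uniform over $S$, which is needed for full equivalence of the output multisets. You state this conditional i.i.d.\ fact explicitly. Likewise, the paper's step (2) is a one-line appeal to induction on Claim~\ref{cl:poissonsum}, which again only matches the total count $N = \sum_i N_i \sim \pois(np/2)$ in {\rpf} with the size in {\rpo}. Your conditional-matching step --- dividing the product of $\pois(p/2)$ pmfs by the $\pois(np/2)$ pmf at $N$ and observing the quotient is the symmetric multinomial $\binom{N}{k_1,\ldots,k_n}(1/n)^N$, which is exactly the conditional count-vector law under $N$ i.i.d.\ uniform draws --- is precisely the ingredient that upgrades the size identity to a genuine identity in distribution over count vectors. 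Your reframing in terms of count vectors $(N_1,\ldots,N_n)$ is also a cleaner way to sidestep the ordering-versus-multiset bookkeeping. In short, same tools, but your writeup closes a gap the paper leaves implicit.
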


\begin{proof}
    We will prove the claim by demonstrating the following steps in succession: (1) \(\rpo\) and \(\rpt\) are equivalent, and (2) \(\rpo\) and \(\rpf\) are equivalent.

    \begin{enumerate}
        \item For simplicity, let us assume \(|S|p = \lambda\). Therefore, in \(\rpo\), \(|T|\) is distributed according to \(\pois(\lambda/2)\). After Line 2 of \(\rpt\), \(|T|\) is distributed according to \(\pois(\lambda)\). Since each element is then selected with probability \(1/2\), the probability that \(|T| = k\) for some positive integer \(k\) is given by:
        \begin{align*}
            \Pr\left[|T| = k\right] &= \sum_{t = k}^{\infty} \frac{e^{-\lambda}\lambda^t}{t!} \binom{t}{k}\left(\frac{1}{2}\right)^t \\
            &= \sum_{t = k}^{\infty} \frac{e^{-\lambda}\lambda^t}{t!} \frac{t!}{k!(t-k)!}\left(\frac{1}{2}\right)^t \\
            &= \frac{e^{-\lambda}}{k!}\left(\frac{\lambda}{2}\right)^k \sum_{t = k}^{\infty} \frac{1}{(t-k)!}\left(\frac{\lambda}{2}\right)^{t-k} \\
            &= \frac{e^{-\lambda}}{k!}\left(\frac{\lambda}{2}\right)^k \sum_{j = 0}^{\infty} \frac{1}{j!}\left(\frac{\lambda}{2}\right)^j \\
            &= \frac{e^{-\lambda}}{k!}\left(\frac{\lambda}{2}\right)^k \cdot e^{\lambda/2} \\
            &= \frac{e^{-\lambda/2}}{k!}\left(\frac{\lambda}{2}\right)^k.
        \end{align*}
        Therefore, in \(\rpt\), the size of \(T\) follows the distribution \(\pois(\lambda/2)\). This shows that \(\rpo\) and \(\rpt\) are equivalent.

        \item The equivalence of \(\rpo\) and \(\rpf\) can be proven by applying mathematical induction on \cref{cl:poissonsum}, thus establishing the overall equivalence.
    \end{enumerate}
\end{proof}

\subsection{Related Work}\label{sec:relatedwork}
The problem of designing efficient techniques for \#DNF has a long history. Starting with the work of Stockmeyer~\cite{S83} and Sipser~\cite{Sip83}, randomized polynomial time algorithms for approximately counting various \#P problems were designed. 
In a breakthrough work, Karp and Luby~\cite{KarpLuby1983} introduced the concept of Monte Carlo algorithms for \#DNF. 
Since then, several FPRAS-s based on similar approaches have been developed~\cite{KLM89}. 
All these techniques use a mixture of sampling and carefully updating a counter. 
In recent years, hashing-based techniques have also been used to design FPRASs for \#DNF~\cite{CMV16,MSV17,MSV19}.  While the theoretical results are one part of the story, the practical usability of these FPRAS algorithms gives a different viewpoint.  Various FPRAS algorithms for \#DNF have been implemented, and their performances analyzed and compared in \cite{MSV17,MSV19}. 
There, the authors observed that no single FPRAS algorithm performed significantly better than the rest on all benchmarks.

Applications of \#DNF to probabilistic databases also motivated a number of  algorithms designed for approximate \#DNF that try to optimize
query evaluation \cite{OHK10,FO11,GS14,TSVO04}. These algorithm are, however, either impractical (in terms of time complexity) or are designed to work on restricted classes of formulas such as read-once, monotone, etc.
In addition to the randomized algorithms, a significant amount of effort has gone into designing deterministic approximation algorithms for \#DNF~\cite{LubyV96,Trevisan04,GopalanMR13}.
However, the challenge of developing a fully polynomial time deterministic approximation algorithm for \#DNF remains open~\cite{GopalanMR13}.

\begin{algorithm}[tb]
\caption{\MVC}\label{algo:basic}
\begin{algorithmic}[1]
\State  $\Threshold \gets \left(\frac{\log(12/\delta)+ \log \numCubes}{\varepsilon^2}\right)$
\State $p \gets 1$ ; $\mathcal{X} \gets \emptyset$
\For{$i = 1$ to $\numCubes$}\label{bline:basic-for-sketch-begin}
\For{all $s \in \mathcal{X}$}\label{bline:basic-remove-begin}
\If{$s \models \cube{i}$}\label{bline:basic-check-remove}
{remove $s$ from $\mathcal{X}$\label{bline:basic-remove}}
\EndIf 
\EndFor \label{bline:basic-remove-end}
\State  $N_i \gets \Bin(2^{n-\width{i}}, p)$ \label{bline:cubesize}
\While{$N_i$ + $|\mathcal{X}|$ is more than $\Threshold$}\label{bline:final-check-threshold}
\State Remove each element of $\mathcal{X}$ with probability $1/2$ \label{bline:final-throw}
\State $N_i = \Bin(N_i, 1/2)$ and $p = p/2$ \label{bline:final-update-prob}
\EndWhile
\State  $k=0$ \label{bline:start-sample}
\For{$j = 1$ to $N_i\log N_i$}
\State $s \gets \mathsf{Sample}(\cube{i})$  \label{bline:basic-sample} 
\If{$s \not\in \mathcal{X}$}
\State $\mathcal{X}$.Append($s$). 
\State $k=k+1$
\EndIf 
\If{ $k == N_i$} \textbf{break} \label{bline:end-sample}
\EndIf
\EndFor 
\EndFor \label{bline:final-for-sketch-end}
\State Output $\frac{|\mathcal{X}|}{p}$ \label{bline:output}
\end{algorithmic}
\end{algorithm}

\section{Background}\label{sec:background}
As mentioned in Section~\ref{sec:introduction}, our algorithmic contributions are based on the recent advances in the streaming literature due to Meel, Vinodchandran, and Chakraborty \cite{MVC21}. To put our contributions in context, we review their algorithm, henceforth referred to as {\MVC} after the initials of the authors.%

{\MVC} is a sampling-based algorithm that makes a single pass over the given DNF formula. The high-level idea of the algorithm is to maintain a tuple $(\mathcal{X},p)$ wherein $\mathcal{X}$ is a set of satisfying assignments while $p$ indicates the probability with which every satisfying assignment of $\F$ is in $\mathcal{X}$. Since the number of solutions of $\F$ is not known a priori,  the value of  probability $p$ is not a predetermined value but changes as the algorithm
proceeds. 

We now provide a description of {\MVC}, whose pseudocode is presented in Algorithm~\ref{algo:basic}. {\MVC} processes each cube sequentially and for every cube, it first removes all the solutions of $\F^i$ that belong to $\mathcal{X}$ (lines~\ref{bline:basic-remove-begin} -- \ref{bline:basic-remove-end}). In Line~\ref{bline:cubesize}, we determine the number of solutions $N_i$ that would be sampled from $\F^i$ if each solution  of $\F^i$ was sampled (independently) with probability $p$. The distribution over the number $N_i$ is simulated by the Binomial distribution. Since we do not want to store more than $\Threshold$ elements in $\mathcal{X}$, if $|\mathcal{X}|+N_i$ is larger than $\Threshold$ we decrease $p$ and appropriately adjust $N_i$ (by sampling from Binomial($N_i,p$) and $\mathcal{X}$ (by removing each element of $\mathcal{X}$ with probability $1/2$). This is done in Lines~\ref{bline:final-check-threshold} to \ref{bline:final-update-prob}. We now need to sample $N_i$ distinct solutions of $\cube{i}$ uniformly at random: to accomplish this task, in Lines~\ref{bline:start-sample} -- \ref{bline:end-sample}, we simply pick solutions of $\cube{i}$ uniformly at random with replacement until we have either generated $N_i$ distinct solutions or the number of samples (with replacement) exceeds $N_i \log N_i$. Finally, in Line~\ref{bline:output}, we return our estimate as the ratio of $\frac{|\mathcal{X}|}{p}$. We refer the reader to~\cite{MVC21} for the theoretical analysis of {\MVC}. It is worth remarking the worst-case time complexity of {\MVC} is $\left(2\numVars\cdot (\log (12\numCubes/\delta))^2\log\log (12\numCubes/\delta) \cdot\varepsilon^{-2}\log \varepsilon^{-1}\right)$.

Upon observing the existence of a new algorithm, our first step was to determine whether such an algorithm can translate to practical techniques for DNF counting. However, rather surprisingly, the resulting implementation could only handle a few hundred variables. The primary bottleneck to scalability is the reliance of {\MVC}'s algorithm on the subroutine $\Bin(k,p)$ in line~\ref{bline:cubesize}. State-of-the-art arbitrary precision libraries take prohibitively long time sampling from $\Bin$ when the first argument is of the order $2^{100}$, which is unfortunately necessary to handle formulas with more than a hundred variables. To further emphasize the overhead due to sampling from $\Bin$, a run of the algorithm would invoke $\Bin$ roughly $\numCubes$ times, and every such invocation when the first argument is of the order of $2^{100}$ is prohibitively slow to handle instances in practice. 
Since $\numCubes$ for DNF instances is in the range of few ten to hundred thousand, such a scheme is impractical in contrast with state-of-the-art techniques that could handle such formulas in the order of a few seconds. At this point, it is worth remarking that the crucial underlying idea of the algorithm is to be able to sample every satisfying assignment of $\cube{i}$ with probability $p$, and the current analysis of {\MVC} crucially relies on the usage of the Binomial distribution.      
Consequently, this raises the questions: {\em Is it possible to design an efficient algorithmic scheme based on the underlying ideas that can also lend itself to practical implementation?} 

\section{Technical Contributions}\label{sec:techical}

The primary contribution of our work is to resolve the aforementioned challenge. To this end, we present a new algorithmic scheme, {\toolname}, that achieves significant runtime improvements over state-of-the-art techniques. As a first step, we seek to address the major bottleneck of {\MVC}: avoiding dependence on $\Bin$ by proposing a different sampling routine which no longer ensures that every solution of a given cube $\cube{i}$ is sampled independently with probability $p$.
In order to achieve a significant runtime performance improvement, we profiled our implementation and discovered that sampling from every cube was the most expensive operation. As a remedy, we propose, inspired by lazy (vs eager) lemma proof generation in modern SMT solvers, {\em lazy sampling}  to delay sampling as much as possible without losing correctness (Section~\ref{sec:lazy}). We then discuss several low-level but crucial enhancements in the implementation of {\toolname}. Finally, we close the section with a theoretical analysis of the correctness of {\toolname}.

\begin{algorithm}[tb]
\caption{$\toolname(F,\varepsilon,\delta)$}\label{algo:afinal}
\label{alg:binomial-compute}
\begin{algorithmic}[1]

\State  $\Threshold \gets \max\left(  12  \cdot \frac{\ln (24/\delta)}{\varepsilon^2}, 6 (\ln \frac{6}{\delta} + \ln m) \right)$
\State  $p \gets 1$ ;   $\mathcal{X} \gets \emptyset$

    \For{$i = 1$ to $\numCubes$}\label{line:final-for-sketch-begin}
        \State $t \gets 2^{n-\width{i}}$
        \For{$s\in \mathcal{X}$}
            \If{$s \models \cube{i}$}
            remove $s$ from $\mathcal{X}$\label{bline:final-remove}
            \EndIf 
        \EndFor
        
        \While{$p \geq \frac{\Threshold}{t}$}
            \State Remove every element of $\mathcal{X}$ with prob. $1/2$ 
            \State $p = p/2$
        \EndWhile
        \State $N_i \gets \pois(t\cdot p)$
        \While{$N_i$ + $|\mathcal{X}| > \Threshold$}\label{line:final-check-threshold}
            \State Remove every element of $\mathcal{X}$ with prob. $1/2$ \label{line:final-throw}
            \State $N_i =  \pois(t\cdot p/2)$ and $p = p/2$ \label{line:final-update-prob}
        \EndWhile 
        \State $S \gets \GenerateSamples (\F^i, N_i)$
        \State $\mathcal{X}.\mathsf{Append}$(S)
    \EndFor \label{line:final-for-sketch-end}
    \State Output $|\mathcal{X}|/p$

\end{algorithmic}
\end{algorithm}

\begin{algorithm}
\caption{$\mathsf{GenerateSamples}(\F^i, N_i)$}\label{alg:generatesamples}
    \begin{algorithmic}[1]
    \State $S \gets \emptyset$
    \For{$j = 1$ to $N_i$}\label{bline:generate-high-begin}
        \State $s \gets \ConstructLazySample(\cube{i})$
        \State $S.\mathsf{Append}(s)$
    \EndFor\label{bline:generate-high-end}
    \State \Return $S$
    \end{algorithmic}
\end{algorithm}

\subsection{Subroutine {\GenerateSamples}}\label{sec:lazy}
As mentioned earlier, the above-proposed sampling scheme allows our algorithm to be on par with the prior state-of-the-art techniques. To achieve further speedup, we observed that the subroutine $\mathsf{Sample}(\cube{i})$ often takes over 99\% of the runtime. Therefore, one wonders whether it is possible to {\em not sample}? At the outset, such a proposal seems counterintuitive as after all, {\toolname} is a sampling-based technique. Upon further investigation, two observations stand out: (1) almost all samples generated by the $\mathsf{Sample}$ routine are removed in line~\ref{bline:final-remove} at some point in the future, and (2) to determine whether to remove $s$ from $\mathcal{X}$, one needs to only determine whether $s \models \cube{i}$, which does not require one to know the assignment to all variables in $s$. Consequently, it is only required to generate assignments to variables in order to check whether $s \models \cube{i}$. We achieve such a design in the subroutine {\GenerateSamples}, which we describe next. 

The subroutine {\GenerateSamples} is presented in Algorithm~\ref{alg:generatesamples}. The primary challenge in {\GenerateSamples} is to handle the generation of $N_i$ solutions (not all distinct) randomly from $\cube{i}$ as if we delay the generation of assignments to unassigned variables in $\cube{i}$, then we would not know whether we have generated $N_i$ solutions. 
{\ConstructLazySample} sets the value to only the variables that appear in $\cube{i}$ and for the rest of the variables, it sets them to a special symbol MARK, that is, $s$ is a mapping from the set of variables to $\{$TRUE, FALSE, and MARK$\}$.
Therefore, in contrast to relying on the expensive operation of pseudorandom generation, we can compute and store $s$  at extremely high speed.  Overall, we have deferred assignment to variables in $s$ (except the ones corresponding to literals in $\cube{i}$) at the time when we are required to check whether $s \models \cube{j}$ when a new cube $\cube{j}$ arrives. At such a time, for all the variables that are set to MARK in $s$ but whose values are fixed in $\cube{j}$, we use the pseudorandom generator to generate a random value for the corresponding variables. Note that once we have assigned all the variables corresponding to literals in $\cube{j}$, we can perform the check whether $s \models \cube{j}$ by only checking whether  $s$ and $\cube{j}$ agree on assignment to variables corresponding to literals in $\cube{j}$. If $s$ and $\cube{j}$ do agree on all such variables, we can remove $s$, which showcases the strength of our approach as we could avoid all the work required to assign the variables in $s$ that are still set to MARK.

\subsection{Engineering Enhancements}

\paragraph{Dense Matrix-based Sample Storage}
Sample storage for all samples is stored in a single contiguous pre-allocated memory array, similar to a dense matrix representation. This helps with cache locality and ensures that when we check the samples to be emptied, we go forward, and only forward, in memory, with fixed jump sizes. This allows the memory subsystem to prefetch values the algorithm will read from memory, thereby masking memory latency, where memory latency can often be over 100x slower than instruction throughput in modern CPUs.

The current maximum number of samples always stays allocated, and we keep a stack where we have the next empty slot. When a sample is removed, we simply put their number on this stack. The size of the stack tells us the number of empty slots (i.e. unfilled sample slots).

Sample storage is further bit-packed. Each variable's value in the sample is represented as 2 bits, as we need to be able to represent not only TRUE and FALSE but also MARK. The bit representation used is 00 = FALSE, 01 = TRUE, 11 = MARK, which allows us to quickly set all-MARK by filling the vector with 1's using highly optimized, SSE memset operations.

\paragraph{Sparse Matrix-based Sample Storage} Since a large portion of the samples contain MARK values, one may ask whether it would be faster to use a sparse matrix representation where only 1's and 0's are stored, along with the number of consecutive MARKs following the 1 or 0. To check whether such a system would be faster, we have also developed an implementation that uses such a sparse matrix representation. Unfortunately, this implementation is very slow for anything but extremely sparse DNFs. We compare its performance to the dense matrix representation in Section \ref{sec:experiments}.

\paragraph{Handling Arbitrary Precision}
We made extensive use of the GNU Bignum library~\cite{GNUBignum} for all values that need high precision. We use MPQ for fractions such as sampling probabilities, and MPZ for large numbers such as the precision product. All bignum variables are pre-allocated and pre-initialized and, when appropriate, re-used to reduce dynamic memory allocation. Furthermore, observe that the sampling probability is always of the form $2^{-k}$ for integer $k$. Therefore, we only keep the exponent bits $k$ and regenerate sample probability when needed. Since the GNU Bignum library has a special function to quickly generate values of the form $2^{-k}$, such re-generation is fast.

\subsection{Theoretical Analysis of the Algorithm}

We will need the following bounds.%

\begin{restatable}[Chernoff Bound]{lemma}{chernoff}
\label{lem:chernoff}
Let $X \gets \pois(\lambda)$, for some $\lambda>0$. Then for any $x > 0$ following two inequalities hold.
\[
\Pr[|X - \lambda| \geq x] \le 2e^{-\frac{x^2}{2\lambda + x}}\;
\]
\[
\Pr[X - \lambda \geq x] \le e^{-\frac{x^2}{2\lambda + x}}\;
\]
\end{restatable}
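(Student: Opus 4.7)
The plan is to follow the standard Chernoff recipe: Markov's inequality applied to the moment generating function, evaluated at an essentially optimal parameter, and then simplified via an elementary inequality to arrive at the quoted Gaussian-like form. The one-sided bound is the real work; the two-sided bound then follows by a union bound over the upper and lower tails.

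\textbf{Step 1: MGF of Poisson.} First I would recall that for $X \sim \pois(\lambda)$ one has $\mathbb{E}[e^{tX}] = e^{\lambda(e^t - 1)}$ for every $t \in \mathbb{R}$. This is a direct power-series computation using $\sum_{k\ge 0} (\lambda e^t)^k/k! = e^{\lambda e^t}$.

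\textbf{Step 2: Upper tail via Markov.} For any $t>0$,
\[
\Pr[X - \lambda \ge x] \;=\; \Pr\!\left[e^{tX} \ge e^{t(\lambda+x)}\right] \;\le\; e^{-t(\lambda+x)}\,\mathbb{E}[e^{tX}] \;=\; \exp\!\bigl(-t(\lambda+x) + \lambda(e^t-1)\bigr).
\]
Differentiating the exponent and setting it to zero gives the minimizer $t^\star = \ln(1 + x/\lambda)$, which yields the classical Poisson tail bound $\Pr[X - \lambda \ge x] \le e^{-\lambda\, h(x/\lambda)}$, where $h(u) := (1+u)\ln(1+u) - u$.

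\textbf{Step 3: Reduction to an analytic inequality.} To match the form in the statement, I need to show that
\[
\lambda\, h(x/\lambda) \;\ge\; \frac{x^2}{2\lambda + x},
\]
or equivalently, setting $u = x/\lambda > 0$, the scalar inequality $h(u) \ge u^2/(2+u)$. I expect this to be the main technical obstacle. Both sides vanish at $u=0$, so by comparing derivatives it suffices to prove $\ln(1+u) \ge u(4+u)/(2+u)^2$ for $u\ge 0$; again both sides vanish at $0$, and a second differentiation reduces the task to the polynomial inequality $(2+u)^3 \ge 8(1+u)$, which is immediate (equality at $u=0$, and the derivative $3(2+u)^2 - 8 \ge 4 > 0$ for $u\ge 0$). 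Chaining these comparisons gives the desired inequality, hence
\[
\Pr[X - \lambda \ge x] \;\le\; \exp\!\left(-\frac{x^2}{2\lambda + x}\right),
\]
which is exactly the second displayed inequality.

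\textbf{Step 4: Lower tail and two-sided bound.} For $0 < x \le \lambda$, the mirror argument with $t<0$ gives $\Pr[\lambda - X \ge x] \le e^{-\lambda h(-x/\lambda)}$, and a routine Taylor-series comparison shows $h(-u) \ge u^2/2$ for $u \in [0,1)$; this yields the even stronger subgaussian bound $\Pr[\lambda - X \ge x] \le e^{-x^2/(2\lambda)}$, and since $2\lambda \le 2\lambda + x$, this is also bounded by $e^{-x^2/(2\lambda+x)}$. (For $x > \lambda$ the lower tail probability is zero.) Adding the two one-sided bounds gives the first inequality $\Pr[|X-\lambda|\ge x] \le 2e^{-x^2/(2\lambda+x)}$. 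The only delicate ingredient in the whole argument is the scalar estimate $h(u) \ge u^2/(2+u)$ in Step 3; everything else is bookkeeping around the Poisson MGF.
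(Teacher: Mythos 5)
Your proof is correct and follows essentially the same route as the paper's appendix proof: obtain the entropy-form Poisson tail bound $\Pr[X-\lambda\ge x]\le e^{-\lambda h(x/\lambda)}$ with $h(u)=(1+u)\ln(1+u)-u$ (you derive it via the MGF and Markov; the paper cites it from Canonne's note), then reduce to the stated form via the scalar inequality $h(u)\ge u^2/(2+u)$, which after clearing denominators is exactly the paper's cited inequality $\log(1+x)\ge 2x/(2+x)$ that you re-prove by two rounds of differentiation. The one place you genuinely diverge --- and improve --- is the lower tail: you establish the subgaussian estimate $h(-u)\ge u^2/2$ by Taylor expansion and then weaken to $x^2/(2\lambda+x)$, whereas the paper only proves its scalar inequality for nonnegative arguments and never explains why the lower-tail exponent involving $h(-x/\lambda)$ is also large enough (it is, because the paper's normalized $h$ is decreasing, but that step is left unstated); your argument closes that small gap cleanly.
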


Proof of \cref{lem:chernoff} is deferred to the appendix.
Now, we complete the proof of our main result.

\begin{theorem}
\label{thm:main}
For any $\eps\leq 1$ and any $\delta \in (0,1)$ the Algorithm~\ref{algo:afinal}  outputs a number in $({(1-\eps)}|\mathsf{Sol(F)}|,(1+\eps)|\mathsf{Sol(F)}|)$ with probability at least $1-\delta$ and runs in expected time
\[
O\left(\frac{1}{\eps^2} m n \log \frac{1}{\delta} \left(\log \frac{1}{\eps} + \log \frac{1}{\delta} + \log m\right)\right) \;.
\]
\end{theorem}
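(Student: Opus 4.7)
The plan is to prove correctness via a loop invariant on the joint distribution of $(\mathcal{X}, p)$ and then to apply the Poisson Chernoff bound (Lemma~\ref{lem:chernoff}) to derive the $(\eps,\delta)$ accuracy guarantee. Writing $F_{\leq i} := \cube{1} \lor \cdots \lor \cube{i}$, the invariant I would maintain after the $i$-th outer iteration is: \emph{conditioned on the current value of $p$, for every $s \in \mathsf{Sol}(F_{\leq i})$, the multiplicity of $s$ in $\mathcal{X}$ is an independent $\pois(p)$ random variable}; equivalently, $|\mathcal{X}|$ given $p$ is distributed as $\pois(p \cdot |\mathsf{Sol}(F_{\leq i})|)$.

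To prove the invariant by induction on $i$, I would trace through each step of the iteration. The remove step evicts exactly the solutions that satisfy $\cube{i}$, so afterwards the invariant restricted to $\mathsf{Sol}(F_{\leq i-1}) \setminus \mathsf{Sol}(\cube{i})$ still holds. Each while-loop halving of $p$ is accompanied by independent thinning of $\mathcal{X}$ with probability $1/2$: by Poisson thinning (the $\rpo \Leftrightarrow \rpt$ direction of Claim~\ref{cl:poissonrp}), independent $\pois(p)$ counts become independent $\pois(p/2)$ counts. Drawing $N_i \gets \pois(t \cdot p)$ samples uniformly with replacement from $\cube{i}$ and appending them to $\mathcal{X}$ is precisely process $\rpo$ applied to $\cube{i}$; by $\rpo \Leftrightarrow \rpf$ this yields independent $\pois(p)$ copies of each element of $\cube{i}$. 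Combining with the carried-over counts (whose supports are now disjoint from $\mathsf{Sol}(\cube{i})$) re-establishes the invariant over all of $\mathsf{Sol}(F_{\leq i})$.

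For accuracy, let $p^\star$ and $|\mathcal{X}^\star|$ denote the final values. Conditioned on $p^\star$, the invariant gives $|\mathcal{X}^\star| \sim \pois(\mu)$ with $\mu = p^\star \cdot |\mathsf{Sol}(\F)|$. Lemma~\ref{lem:chernoff} applied with $x = \eps \mu$ yields
\[
\Pr\!\left[\,\bigl||\mathcal{X}^\star| - \mu\bigr| \geq \eps \mu \,\Big|\, p^\star\right] \;\leq\; 2\exp\!\left(-\frac{\eps^2 \mu}{2 + \eps}\right).
\]
A separate argument using the threshold mechanism shows $\mu = \Omega(\Threshold)$ with high probability: the last halving of $p$, say at iteration $j$, was triggered by $|\mathcal{X}| + N_i > \Threshold$, which via Poisson concentration forces $p \cdot |\mathsf{Sol}(F_{\leq j})|$ to be at least a constant fraction of $\Threshold$ just after the halving; since no halving occurs between iteration $j$ and $m$ and $|\mathsf{Sol}(F_{\leq i})|$ is monotone in $i$, we obtain $\mu = p^\star \cdot |\mathsf{Sol}(\F)| \geq \Threshold/C$. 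Plugging in the first branch $\Threshold \geq 12 \ln(24/\delta)/\eps^2$ makes the Chernoff tail at most $\delta/2$; the second branch $6(\ln(6/\delta) + \ln m)$ absorbs a union bound over the $m$ iterations against the bad events where intermediate Poisson draws overshoot. The output $|\mathcal{X}^\star|/p^\star$ is therefore a $(1 \pm \eps)$-approximation to $|\mathsf{Sol}(\F)|$ with probability at least $1 - \delta$.

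For the runtime, the outer loop executes $m$ times. Per iteration, the dominant operations are the scan of $\mathcal{X}$ in the remove step (cost $O(|\mathcal{X}| \cdot \width{i}) = O(\Threshold \cdot n)$ using lazy sampling) and the $N_i$ calls to \ConstructLazySample (each $O(n)$); because the threshold loop keeps $\mathbb{E}[N_i] \leq \Threshold$, the expected per-iteration cost is $O(\Threshold \cdot n)$. The total number of $p$-halvings across the whole run is $O(n)$ (since $p$ cannot usefully drop below $2^{-n}$), so halving work is subdominant, and $\Threshold = O((\log m + \log(1/\delta))/\eps^2)$. Absorbing the additional $\log(1/\delta) + \log(1/\eps)$ factor from high-precision Poisson sampling and arbitrary-precision arithmetic on values of the form $2^{-k}$ yields the stated runtime. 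The main obstacle is cleanly justifying the conditioning on $p^\star$: since $p^\star$ is a random stopping time depending on the algorithm's trajectory, one must push the invariant forward step-by-step through each random halving, either via a filtration/tower-rule argument or by coupling the execution to a hidden Poisson point process on $\mathsf{Sol}(\F)$.
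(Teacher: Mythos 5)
Your outline has the right ingredients — the Poisson multiplicity invariant (via Claim~\ref{cl:poissonrp}), the Chernoff bound on $|\mathcal{X}^\star|$, and a separate argument that the threshold mechanism keeps $p^\star |\mathsf{Sol}(\F)|$ on the order of $\Threshold$ — but the obstacle you flag at the end is the real gap, and it is not merely a technical nuisance. Your central claim, that ``conditioned on the current value of $p$, the multiplicities in $\mathcal{X}$ are independent $\pois(p)$,'' is not true as stated: $p^\star$ is determined by the sequence of $|\mathcal{X}| + N_i$ comparisons against $\Threshold$, so conditioning on $p^\star$ biases the multiplicity distribution (a small $p^\star$ is evidence the counts were large along the way). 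A filtration/tower-rule argument does not repair this by itself, because the stopping rule is correlated with the very quantity you are trying to concentrate.

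The paper resolves this without any conditioning. It defines, for each fixed $j$, an \emph{unconditional} random multiset $X_j^{(i)}$ (what you would get after exactly $j$ halvings by the time cube $i$ is processed, regardless of when the algorithm would actually halve), and shows $|X_j^{(m)}| \sim \pois(|S_m|/2^j)$ outright. Letting $E_j^{(m)}$ be the event that the algorithm finishes with $p = 2^{-j}$ and $A_j^{(m)}$ the event that $|X_j^{(m)}|$ falls outside the $(1\pm\eps)$ window, the failure probability is $\Pr[\bigcup_j (E_j^{(m)} \wedge A_j^{(m)})]$, and this is split as $\sum_{j < j^*}\Pr[A_j^{(m)}] + \Pr[\bigcup_{j\ge j^*} E_j^{(m)}]$ with $j^*$ chosen so that $2^{-j^*}|S_m| < \Threshold/4$. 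For $j < j^*$ one applies Chernoff to the unconditional Poisson; for $j \ge j^*$ one bounds the probability that $p$ ever drops that low by a union bound over $i$ on the overshoot event $\pois(|S_i|2^{-(j^*-1)}) > \Threshold$. This decomposition is exactly the ``coupling to a hidden Poisson process'' you gesture at, but made precise; it is the missing piece you would need to supply to make your conditioning step sound. The runtime portion of your sketch matches the paper's accounting (the extra $\log\frac{1}{\eps}+\log\frac{1}{\delta}+\log m$ factor comes from arbitrary-precision sampling and arithmetic, as you suspected).
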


\begin{proof}
    Let $S_i$ be the set of all satisfying assignments of $\mathsf{F}_i := \cube{1} \vee \cdots \vee \cube{i}$. To begin with, we will inductively define the array $X_j^{(i)}$ of distributions that we may get over the sets $S_i$ during the execution of the algorithm for $i \in \{1,\ldots,m\}$, and $j$ being a non-negative integer. The set $X_0^{(1)}$ is defined to be $S_1$. Then we consider two sampling moves:
    \begin{description}
    \item[Decrease $p$ move:] To obtain $X_{j+1}^{(i)}$ from $X_j^{(i)}$, we sample each element of $X_j^{(i)}$ independently with probability $1/2$. This corresponds to Step 8, and Step 12 in Algorithm~\ref{algo:afinal}. 
    \item[Next set move:] To obtain $X_{j}^{(i+1)}$ from $X_j^{(i)}$, we remove all elements $X_{j}^{(i)}$ that is a satisfying assignment of $\cube{i+1}$ (corresponding to step 6 of the Algorithm~\ref{algo:afinal}), and sample (with replacement) $N_i$ satisfying assignments of $\cube{i+1}$ and add those to $X_{j}^{(i)}$ (corresponding to step 10, 13, and 14 of Algorithm~\ref{algo:afinal}), where $N_i$ is drawn from $\pois\left(\frac{|\cube{i+1}|}{2^j}\right)$. Therefore by claim~\ref{cl:poissonrp}, for all elements $s \in \cube{i+1}$, $X_{j}^{(i+1)}$ contains $k_j(s)$ number of copies of $s$, where $k_j(s)$ drawn independently from $\pois\left(\frac{1}{2^j}\right)$.
    \end{description}
    
    We make the following simple observation. 
    \begin{obs}
    For all $i,j$, the random multiset $X_{j}^{(i)}$ contains samples from set $S_i$ and for each satisfying assignment $s \in S_i$, $S_i$ contains $k_j(s)$ many copies of $s$ where $k_j$ is drawn independently from $\pois\left(\frac{1}{2^j}\right)$.
    \end{obs}
    
    For $1 \leq i \le m$, and $j \ge 0$, we define the event $E_j^{(i)}$ as the event that after the algorithm processes the first $i$ cubes, the value of $p$ is $\frac{1}{2^j}$. These events can depend arbitrarily on all the random sets $X_{j}^{(i)}$'s sampled during the execution of the algorithm. This corresponds to an arbitrary sequence of the aforementioned moves over the array of distributions $X_j^{(i)}$ depending on the decisions of the Algorithm~\ref{algo:afinal} in Steps 7, and 11. 
    
    Let us define another event $A_j^{(i)}$ as follows,
    $$A_j^{(i)} := \left| X_{j}^{(i)} \right| \notin \Big[|S_i|2^{-j}{(1-\eps)}, |S_i|2^{-j}(1+\eps)\Big]$$

    Let $j^*$ be the smallest $j$ such that $\frac{1}{2^j} < \frac{\mathsf{Thresh}}{4|S_m|}$. 
    
    We note that the probability that the algorithm outputs a number that is not within the bound $\Big[|S_m|{(1-\eps)}, |S_m|(1+\eps)\Big]$ is given by,
    \begin{align*}
    \Pr[\bigcup_{j=0}^\infty (E_j^{(m)} \wedge A_j^{(m)})] &\le \sum_{j=0}^{j^* - 1} \Pr[ A_j^{(m)}] + \Pr[\bigcup_{j \ge j^*} (E_j^{(m)}] \;.
    \end{align*} 
    
    We bound $\sum_{j=0}^{j^* - 1} \Pr[ A_j^{(m)}]$ and $\Pr[\bigcup_{j \ge j^*} E_j^{(m)}]$.
    
    From the above observation for any fixed $j$, $X^m_j$ contains $k_j$ copies of each satisfying assignment of $S_m$. Therefore from claim~\ref{cl:poissonrp} the size of $X^m_j$ is distributed according to $\pois\left(\frac{|S_m|}{2^j}\right)$.
    
    By Lemma~\ref{lem:chernoff}, we have that since $|S_m|\frac{1}{2^{j^* - 1}} > \frac{\mathsf{Thresh}}{4}$ 
    \begin{align*}
    \sum_{j=0}^{j^* - 1} \Pr[ A_j^{(m)}] &\le 2e^{-\eps^2\mathsf{Thresh}/12} + 2 e^{-\eps^2 \mathsf{Thresh}/6} + \cdots \\
    &< 4e^{-\eps^2 \mathsf{Thresh}/12} \le \frac{\delta}{6} 
    \;,
    \end{align*}
    where we use that $ \mathsf{Thresh} > \frac{12 \ln (24/\delta)}{\eps^2}$ and $\eps \leq 1$. %
    
    To bound $\Pr[\bigcup_{j \ge j^*} E_j^{(m)}]$, we notice that the event $\bigcup_{j \ge j^*} E_j^{(m)}$ occurs only if the decrease $p$ move happens from $X_{(j^* - 1)}^{(i)}$ to $X_{(j^* )}^{(i)}$ for some $i$. This happens if $\pois\left(\frac{|S_i|}{2^{j^*-1}}\right)$ is larger than $\mathsf{Thresh}$. Since $\frac{1}{2^{j^* - 1}} < \frac{\mathsf{Thresh}}{2|S_m|}$, and so $\mathsf{Thresh} - |S_i| \cdot \frac{1}{2^{j^*-1}} > \frac{\mathsf{Thresh}}{2}$, by Lemma~\ref{lem:chernoff}, the probability of this event is at most
    \[
    e^{-\mathsf{Thresh}/6} \le \frac{\delta}{6m} \;,
    \]
    where we use that $\mathsf{Thresh} \ge 6 \left(\ln \frac{6}{\delta} + \ln m\right)$.  By a union bound over $i$, we have that 
    \[
    \Pr[\bigcup_{j \ge j^*} E_j^{(m)}] \le m \cdot \frac{\delta}{6m} = \frac{\delta}{6} \;. 
    \]
    Therefore the desired bound on the error probability follows.
    Now we bound the time complexity. The steps 5, 6 are executed at most $m \cdot|\mathcal X| < m\cdot \Threshold$ times. Since, the probability that $p$ cannot drop below $\frac{1}{2^n}$ in steps 7-9, the steps 7-9 are executed at most $n+m$ times. Since, after step 10, $\mathbb{E}[N_i] < \Threshold$, and $|\mathcal X| \le \Threshold$, steps 11-13 are executed $O(1)$ times for every $i$, and hence $O(m)$ times in total. So, the overall time complexity is dominated by Steps 10 and 14, and hence is bounded by
    \begin{align*}
    &m \cdot \Threshold (\ln \Threshold + \ln \frac{6}{\delta} + \ln m) \cdot n \\
    & \;\;\; = O\left(\frac{1}{\eps^2} m n \log \frac{1}{\delta} \left(\log \frac{1}{\eps} + \log \frac{1}{\delta} + \log m\right)\right) \;.  
    \end{align*}
    
    \end{proof}

\section{Empirical Evaluation}\label{sec:experiments}

We followed the methodology outlined in ~\cite{MSV19}: we use the same benchmark generation tool with the parameters specified by the authors. In particular, the value of $n$ varied from 100 to 100'000 while the value of $m$ varied from $300$ to $8 \times 10^5$ and the width of cubes varied from $3$ to $43$. Furthermore, in line with prior work, we set $\varepsilon$ to 0.8 and $\delta$ to $0.36$. We compare the runtime performance of {\toolname}\footnote{The {\toolname} is available open-source at
\protect\url{https://github.com/meelgroup/pepin}} with the prior state of 
the art techniques, {\klm}~\cite{KLM89}, 
{\dklr}~\cite{DKLR00}, and 
{\dnfapproxmc}~\cite{MSV17}. These techniques were observed to be incomparable to each other while outperforming the rest of the alternatives in ~\cite{MSV19}. All our experiments were conducted on a high-performance computer cluster, each node consisting of 2xE5-2690v3 CPUs with 2x12 real cores and 96GB of RAM, i.e., 4GB limit per run. The timeout was set to be 500 seconds for all runs.

The primary objective of our empirical evaluation was to answer the following questions:
\begin{description}
\item[RQ 1] How does the runtime performance of {\toolname} compare to the state of the art tools {\klm}, {\dklr}, and {\dnfapproxmc}? 
\item[RQ 2] How accurate are the estimates computed by {\toolname}? 
\end{description}

In summary, we observe that {\toolname} achieves significant runtime performance improvements over prior state-of-the-art. In particular, {\toolname} achieves a PAR-2 score of 3.9 seconds while the prior state-of-the-art technique could only achieve a PAR-2 score of 158 seconds, thereby achieving a nearly 40$\times$ speedup. Furthermore, we observe that the observed $\varepsilon$ is only $0.10$ -- significantly lower than $\varepsilon=0.8$. 

\subsection{Performance Experiments}
We follow the methodology of \cite{MSV19} in the presentation and analysis of the results. Accordingly, we first present the cactus plots of performance comparisons in  Fig.~\ref{fig:perf-exp}. The first three subfigures show the performance of the counters on DNFs with different cube widths since the cube width has a significant effect on the performance of all solvers. The final subfigure shows the performance over all the cube widths. 

The graph in Fig. \ref{fig:perf-comp-3} shows that at small cube widths, the previous set of counters all exhibit comparable, and relatively poor, performance, with {\dklr} performing the worst. In fact, even the best of the previous set of counters, {\klm}, only managed to count 60 instances (for cube width 3) within the 500s timeout. In contrast, {\toolname} shows remarkable performance here: it finished for all 180 instances for cube width 3, all under 25 seconds. 
This is primarily due to its lazy sample generation technique, which skips generating random values for variables not in the DNF clause, resulting in many saved computations for cube width 3.

As the cube width increases in Figs. \ref{fig:perf-comp-13} and \ref{fig:perf-comp-43}, the performance of {\toolname} stays similarly good, while the other counters start exhibiting better performance, but never catch up to {\toolname}. For larger cube widths, the lazy sampling starts to be less relevant while the careful design and implementation of the counter plays a more significant role.

Finally, Fig. \ref{fig:perf-comp-all} shows the performance of all the counters over all cube widths. Here, it is clear that {\toolname} outperforms all the counters by a large margin. In fact, {\toolname} is faster than any other counter on all files, except for 27 files, 24 of which are under 1s slower to count by {\toolname}, and the remaining 3 are only under 3s slower to count.

We also used the PAR-2 score to analyze the performance comparison of {\toolname}. PAR-2 score is a penalized average runtime. It assigns a runtime of two times the time limit for each benchmark the tool timed out on.
We present the PAR-2 scores for all the counters in Table~\ref{tbl:PAR2-scores}. As shown in Table~\ref{tbl:PAR2-scores}, {\toolname} outperforms all other solvers by a wide margin, thereby, presenting an affirmative answer to the challenge posed by~\cite{MSV19}.

\begin{table}[tb]
	\centering
	\begin{tabular}{l @{\extracolsep{\fill}} cc}
 \toprule
Counter & finished & PAR-2 score\\
 \midrule
{\toolname}         &900     &2.16\\
{\klm}              &690     &130\\
{\dklr}              &641     &168\\
{\dnfapproxmc}             &270     &375\\
\bottomrule
\end{tabular}
\caption{Comparing the PAR2 scores of the different counters over the performance problems.}
\label{tbl:PAR2-scores}
\end{table}

\begin{figure*}[htb]
    \centering
    \begin{subfigure}[b]{0.4\textwidth}
        \centering
        \includegraphics[scale=0.40]{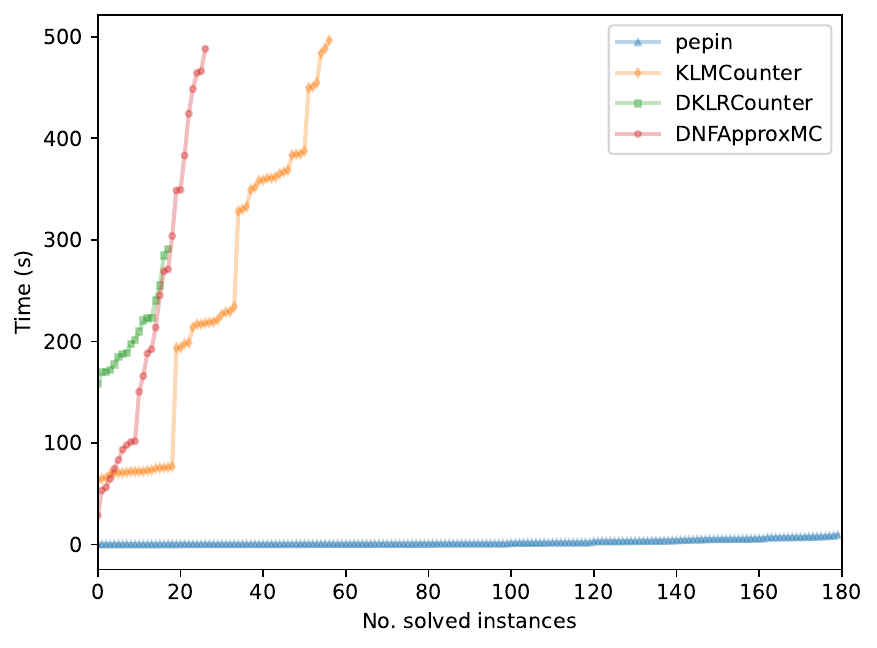}
        \caption[Cubes]%
        {{\small Cube of size 3}}    
        \label{fig:perf-comp-3}
    \end{subfigure}
    \hfill
    \begin{subfigure}[b]{0.4\textwidth}  
        \centering 
        \includegraphics[scale=0.40]{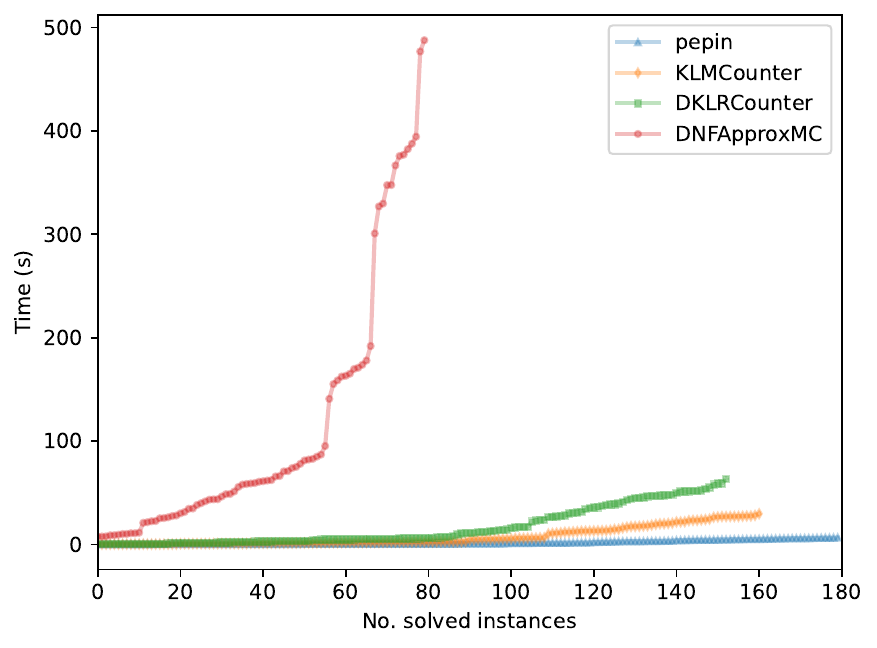}
        \caption[]%
        {{\small Cube of size 13}}    
        \label{fig:perf-comp-13}
    \end{subfigure}
    \vskip\baselineskip
    \begin{subfigure}[b]{0.4\textwidth}   
        \centering 
        \includegraphics[scale=0.40]{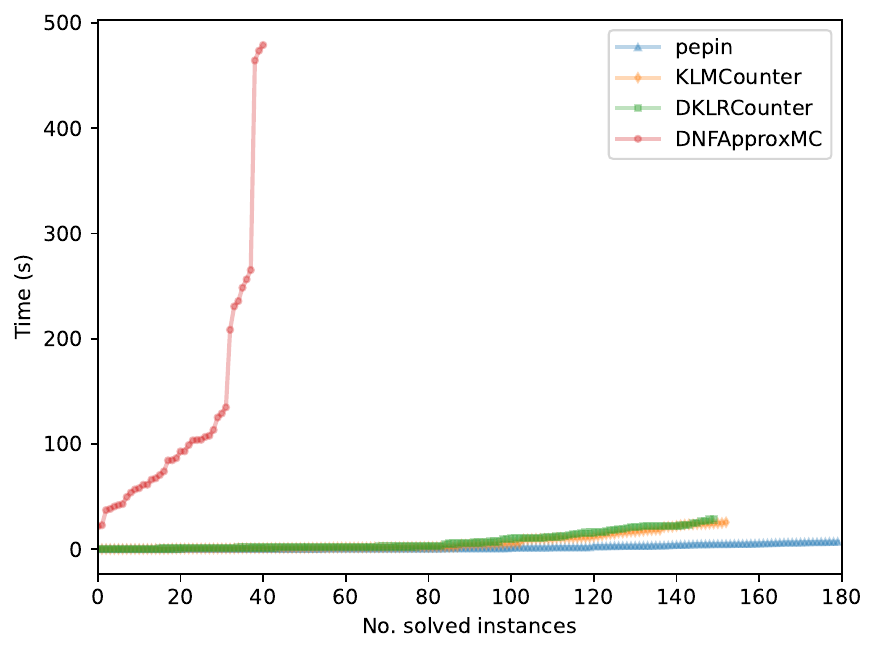}
        \caption[]%
        {{\small Cube of size 43}}    
        \label{fig:perf-comp-43}
    \end{subfigure}
    \hfill
    \begin{subfigure}[b]{0.4\textwidth}   
        \centering 
        \includegraphics[scale=0.40]{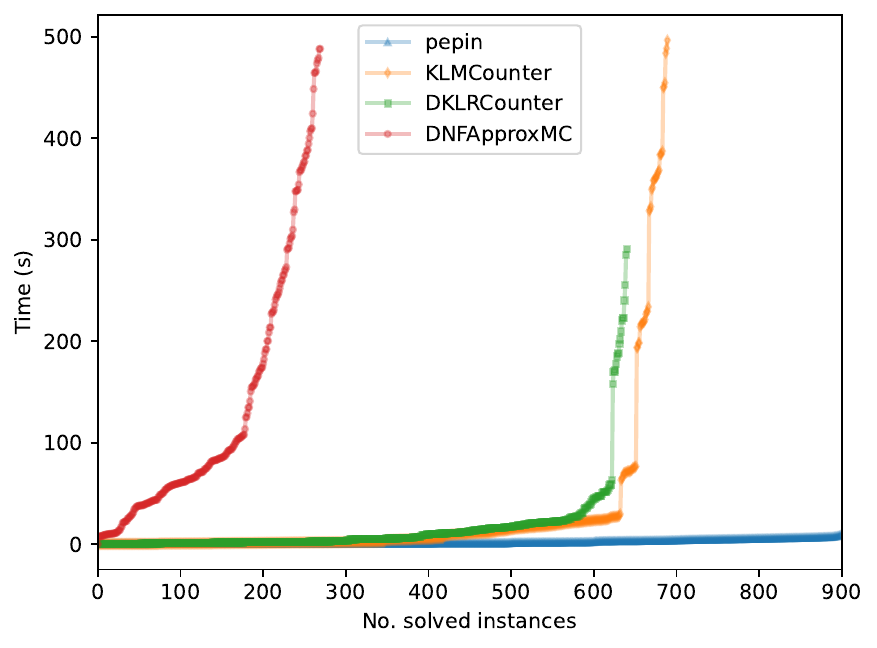}
        \caption[]%
        {{\small All Cubes}}    
        \label{fig:perf-comp-all}
    \end{subfigure}
    \caption{Performance comparison of {\toolname} against the other counters, with different cube widths. As can be seen on the included plots, the cube width matters greatly for most counters other than {\toolname}. This is due to the sparse sampling strategy of {\toolname}.} 
    \label{fig:perf-exp}
\end{figure*}

\subsection{Accuracy Experiments}
To measure the accuracy of {\toolname}, we compared the counts returned by {\toolname} with that of the exact counter, {\ganak}, for all the instances for which {\ganak} could terminate successfully.
Figure~\ref{fig:prec-exp-ganak} shows the relative error while computing the counts by {\toolname}. The $y$-axis represents the relative error of the counts. The tolerance factor ($\varepsilon = 0.8$) is denoted by a red straight line. 
We observed that for all the instances, {\toolname} computed counts within the tolerance. Furthermore, the average mean of observed error for all benchmarks is $0.102$-- significantly better than the theoretical guarantee of $\varepsilon = 0.8$.

\begin{figure}[tb]
\centering
\includegraphics[scale=0.4]{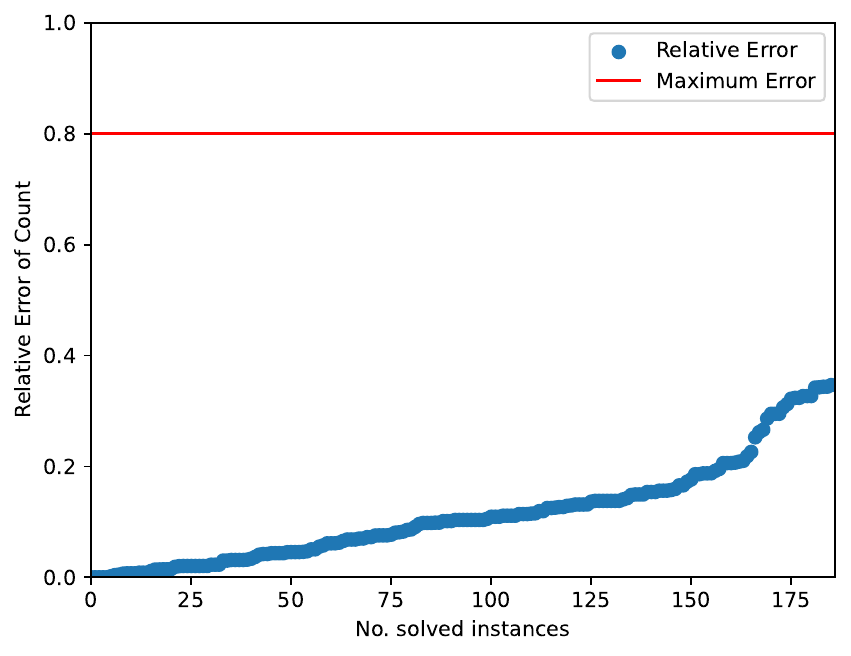}
\caption{The count returned by {\toolname} compared to the exact counter GANAK. All counts of {\toolname} were well within the 80\% permissible error rate as dictated by $\eps=0.8$}
\label{fig:prec-exp-ganak}
\end{figure}

\section{Conclusion}\label{sec:conclusion}
In this paper, we successfully tackled the challenge of designing an FPRAS for \#DNF that outperforms other FPRAS in practice. 
The problem of approximate counting of Disjunctive Normal Form (DNF) formulas, also known as \#DNF counting, has received significant attention from both theoretical and applied researchers. 
However, the challenge of designing an FPRAS that outperforms other FPRASs in practice remained open.
In this paper, we resolve the aforementioned challenge by presenting a new FPRAS called {\toolname} that achieves significant performance improvements on top of the prior state-of-the-art techniques. 
An intriguing direction for future work is to enhance the data structures used in {\toolname}.

\section*{Acknowledgements}
This work was supported in part by the National Research Foundation Singapore under its NRF Fellowship Programme[NRF-NRFFAI1-2019-0004 ], Ministry of Education Singapore Tier 2 grant MOE-T2EP20121-0011,  Ministry of Education Singapore Tier 1 Grant [R-252-000-B59-114 ], and Amazon Research Award.  The computational work for this article was performed on resources of the National Supercomputing Centre, Singapore https://www.nscc.sg
DA was supported by the bridging grant at the Centre for Quantum Technologies titled ``Quantum algorithms, complexity, and communication". MO was supported by the MOE-Tier 1 grant titled ``Non-Malleability and Randomness- Foundations of
Leakage and Tamper Resilient Cryptography" [A-8000614-00-00].

 \clearpage

\appendix

\section{Evaluating {\toolname} Against Previous Versions}

In this section, we present a comprehensive performance comparison between {\toolname} and its predecessor~\cite{SACMO23}. 
The previous version allowed access only to the Binomial distribution, instead of the Poisson distribution, which is costly for arbitrarily large (and small) parameters. We will refer to this version as {\oldtoolname}.
Our evaluation focuses on the runtime efficiency of both tools across a diverse set of 900 instances. 

For the evaluation, we followed a similar setup as those used in the earlier performance experiments with {\toolname}. Specifically, we utilized the same 900 benchmarks that were employed during the performance evaluation of {\toolname}. These benchmarks varied widely in their parameters: the value of \(n\) ranged from 100 to 100,000, the value of \(m\) ranged from 300 to \(8 \times 10^5\), and the width of cubes varied from 3 to 43. Additionally, we set the input parameters \(\delta\) to 0.36 and \(\eps\) to 0.8.

\begin{figure}[h]
	\centering
    \begin{subfigure}[b]{0.4\textwidth}
        \centering
        \includegraphics[scale=0.40]{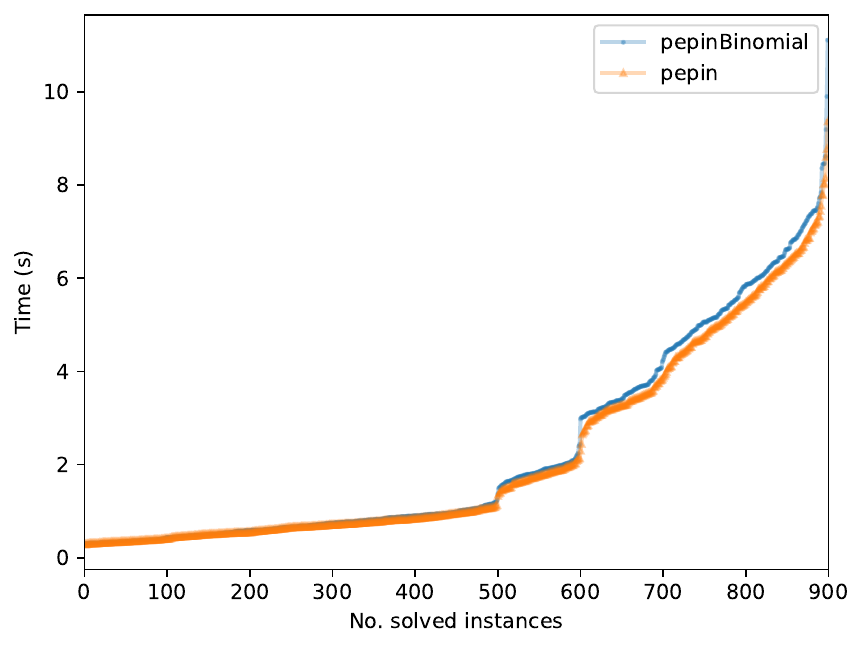}
		\caption{Cumulative time comparison}
        \label{fig:pepin-old-1}
    \end{subfigure}
	\centering
    \begin{subfigure}[b]{0.4\textwidth}
        \centering
        \includegraphics[scale=0.40]{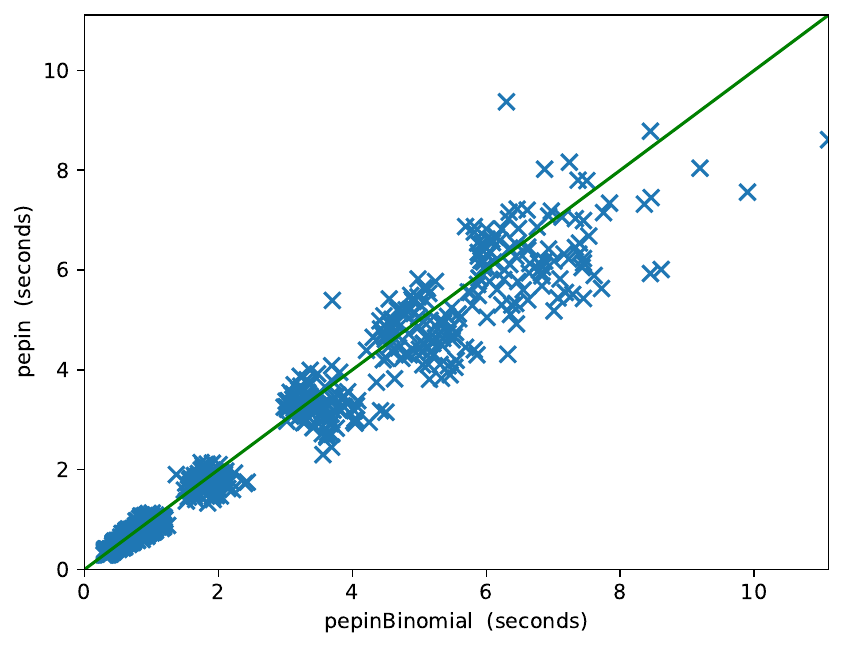}
		\caption{Instance-wise performance comparison}
        \label{fig:pepin-old-2}
    \end{subfigure}
	\caption{Performance comparison of {\toolname} against its earlier version {\oldtoolname}.}
	\label{fig:pepin-old}
\end{figure}

\begin{table}[h]
	\centering
	\begin{tabular}{p{0.13\textwidth}p{0.13\textwidth}}
 \toprule
Counter & PAR-2 score\\
 \midrule
{\toolname}         &2.16\\
{\oldtoolname}      &2.28\\
\bottomrule
\end{tabular}
\caption{Comparing the PAR2 scores of {\toolname} and {\oldtoolname} over the performance problems.}
\label{tbl:pepin-PAR2-scores}
\end{table}

Fig.~\ref{fig:pepin-old} illustrates the performance comparison between {\oldtoolname} and {\toolname} across all instances. 
Fig.~\ref{fig:pepin-old-1} illustrates the time taken by {\toolname} and {\oldtoolname} to solve all 900 instances. The X-axis denotes the number of solved instances and the Y-axis denotes the time consumed by the counters. The plot clearly depicts that {\oldtoolname} eventually requires more time to completely finish all the instances, indicating overall performance improvement of {\toolname} over {\oldtoolname}.

Fig.~\ref{fig:pepin-old-2} compares the instance-wise runtime performance of {\toolname} against {\oldtoolname}. The X-axis (resp. Y-axis) denotes the time consumed by {\oldtoolname} (resp. {\toolname}) to solve each instance. 
The plot shows that most of the points are below the X=Y line, which clearly indicates that for the majority of the instances, {\oldtoolname} requires more time than {\toolname}. 

Additionally, we analyzed the PAR2 score and observed that {\toolname} achieves a PAR2 score of 2.16 seconds, whereas {\oldtoolname} achieves a PAR2 score of 2.28 seconds, denoting around 5\% performance improvement of {\toolname} from {\oldtoolname}.

\newpage

\section{Poisson Tail Bounds}

\begin{lemma}
\label{lem:chernoff2}
Let $X \gets \pois(\lambda)$, for some $\lambda>0$. Then for any $x > 0$,
\[
\Pr[X \geq \lambda + x] \le e^{-\frac{x^2}{2\lambda + x}}\;
\]
and for any $0 < x < \lambda$,
\[
\Pr[X  \leq \lambda - x] \le e^{-\frac{x^2}{2\lambda + x}}\;
\]
\end{lemma}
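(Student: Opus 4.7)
The plan is to apply the standard Chernoff moment-generating-function technique to the Poisson distribution and then simplify the resulting Bennett-form exponent to the cleaner ratio appearing in the statement. Recall that for $X \sim \pois(\lambda)$ one has $E[e^{tX}] = \exp(\lambda(e^t - 1))$ for every real $t$, which is the only distributional fact we need.

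For the upper tail, I would fix $t > 0$, apply Markov's inequality to $e^{tX}$, and obtain $\Pr[X \geq \lambda + x] \leq \exp\bigl(-t(\lambda+x) + \lambda(e^t - 1)\bigr)$. Differentiating the exponent in $t$ and setting the derivative to zero gives the optimizer $t^{*} = \ln(1 + x/\lambda)$; substituting back yields the canonical Bennett bound
\[
\Pr[X \geq \lambda + x] \;\leq\; \exp\bigl(-\lambda\, h(x/\lambda)\bigr), \qquad h(u) := (1+u)\ln(1+u) - u.
\]
From here I would reduce to the stated form using the real-analysis inequality $h(u) \geq u^2/(2+u)$ for all $u > 0$. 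Writing $u = x/\lambda$ this gives $\lambda h(x/\lambda) \geq x^2/(2\lambda + x)$, which is exactly the exponent claimed.

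For the lower tail I would run the symmetric argument with $t < 0$: Markov on $e^{tX}$ gives $\Pr[X \leq \lambda - x] \leq \exp\bigl(-t(\lambda - x) + \lambda(e^t - 1)\bigr)$, and the optimizer is $t^{*} = \ln(1 - x/\lambda) < 0$ (valid because $x < \lambda$). Substituting yields
\[
\Pr[X \leq \lambda - x] \;\leq\; \exp\bigl(-\lambda\, h(-x/\lambda)\bigr),
\]
with $h(-v) = (1-v)\ln(1-v) + v$ for $v = x/\lambda \in (0,1)$. Since the Taylor series $h(-v) = \tfrac{v^2}{2} + \tfrac{v^3}{6} + \tfrac{v^4}{12} + \cdots$ has all nonnegative terms, we get $h(-v) \geq v^2/2$, and then the trivial bound $v^2/2 \geq v^2/(2+v)$ recovers exactly the stated exponent $x^2/(2\lambda + x)$. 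Notice that the lower-tail bound we are asked to prove is slightly loose (the sharper form has $2\lambda$ in place of $2\lambda + x$), which is convenient here since the weaker target lets us match the upper-tail expression with no extra work.

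The only step that is not mechanical is verifying the key inequality $h(u) \geq u^2/(2+u)$ for $u > 0$, which I expect to be the main (minor) obstacle. I would handle it by letting $g(u) = h(u) - u^2/(2+u)$, checking $g(0) = 0$, and showing $g'(u) \geq 0$ for $u > 0$; a short computation gives $g'(u) = \ln(1+u) - u(4+u)/(2+u)^2$, whose nonnegativity can be verified either by noting that both sides vanish at $u=0$ and comparing second derivatives, or by rewriting the target as the Padé-type inequality $\ln(1+u) \geq 2u/(2+u) + \text{correction}$ that is standard (and is the same kind of estimate used in Canonne's note cited in the bibliography). With this inequality in hand, both tail bounds drop out immediately from the Bennett exponents.
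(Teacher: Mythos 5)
Your approach is essentially the paper's: both routes pass through the Bennett-form Poisson tail exponent $\lambda\bigl[(1+u)\ln(1+u)-u\bigr]$ with $u=x/\lambda$ (you re-derive it via the MGF/Markov argument, the paper cites Canonne's note for it), and then both reduce that exponent to $x^2/(2\lambda+x)$ by proving $(1+u)\ln(1+u)-u \ge u^2/(2+u)$. One small simplification worth noting: that last inequality reduces purely algebraically, with \emph{no} correction term, to exactly the Pad\'e inequality $\ln(1+u)\ge 2u/(2+u)$ (multiply by $2+u$, cancel a factor of $1+u$), which is how the paper handles it, so the derivative-comparison argument you sketch is more work than needed; your explicit Taylor-series treatment of the lower tail is, if anything, more careful than the paper's, which is terse on that half.
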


\chernoff*

\begin{proof}
    By putting together the following inequalities: $\Pr[X \geq \lambda + x ] \leq e^{-\frac{x^2}{2\lambda + x}}$, $\Pr[X \geq \lambda - x ] \leq e^{-\frac{x^2}{2\lambda + x}}$ for $x >0$, we deduce that, 
    \[
    \Pr[|X - \lambda| \geq x] \le 2e^{-\frac{x^2}{2\lambda+x}}\;
    \]
\end{proof}

Before we proceed to prove \cref{lem:chernoff2}, we need to define the function $h: [-1, \infty) \to \mathbb{R}$ such that,  for $z \in [-1, \infty)$, $$h(z) = 2 \frac{(1+z)\log(1+z) - z}{z^2}$$ 
\begin{fact}[\cite{kozma}]
\label{fact:log}
For $x\geq 0$, the following inequality holds,
\[\log(1+x) \geq \frac{2x}{2+x}\]
\end{fact}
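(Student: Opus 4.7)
The plan is to prove the inequality by elementary calculus: define $g(x) := \log(1+x) - \frac{2x}{2+x}$ and show that $g(x) \geq 0$ on $[0,\infty)$ by verifying $g(0) = 0$ together with $g'(x) \geq 0$. Since both terms agree at $x = 0$, any proof that $g$ is non-decreasing immediately yields the claim.

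The concrete computation proceeds as follows. First I would note $g(0) = \log 1 - 0 = 0$. For the derivative, using the paper's convention that $\log$ denotes base-$2$ logarithm (the argument for natural logarithm is identical and in fact tighter, since $\log_2 \geq \ln$ on $[1,\infty)$), I would compute
\[
g'(x) \;=\; \frac{1}{(1+x)\ln 2} \;-\; \frac{(2+x)\cdot 2 - 2x}{(2+x)^2} \;=\; \frac{1}{(1+x)\ln 2} \;-\; \frac{4}{(2+x)^2}.
\]
The cleanest way to establish $g'(x) \geq 0$ is to first prove the sharper statement with $\ln 2$ replaced by $1$, namely $\frac{1}{1+x} \geq \frac{4}{(2+x)^2}$ on $[0,\infty)$. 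Cross-multiplying (both denominators are positive for $x \geq 0$) reduces this to $(2+x)^2 \geq 4(1+x)$, which after expansion is just $x^2 \geq 0$. Since $\ln 2 < 1$ only inflates the left-hand side of the comparison, $g'(x) \geq 0$ follows. Therefore $g$ is monotonically non-decreasing on $[0,\infty)$, which together with $g(0) = 0$ gives $g(x) \geq 0$ and hence the claimed inequality.

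I do not foresee a genuine obstacle here: the bound is the standard $[1,1]$ Padé approximant of $\ln$ at $1$ (equivalently, the substitution $u = 1+x$ recasts it as $\ln u \geq 2(u-1)/(u+1)$ for $u \geq 1$), and the only mild care needed is consistency with the paper's logarithm convention. The sole sanity check worth doing explicitly is that the algebra in the derivative comparison indeed collapses to $x^2 \geq 0$; once that is verified, monotonicity plus the boundary value at $x = 0$ closes the argument. An alternative route, if one prefers to avoid calculus, would be to set $u = 1+x$ and bound $\ln u$ using the substitution $u = (1+t)/(1-t)$ with $t = x/(2+x) \in [0,1)$, then expand $\ln\frac{1+t}{1-t} = 2\sum_{k \geq 0} t^{2k+1}/(2k+1) \geq 2t$, yielding exactly $\ln(1+x) \geq 2x/(2+x)$; I would only fall back to this if the reviewer preferred a series-based presentation.
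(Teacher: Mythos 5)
Your proof is correct. Note that the paper itself does not prove this fact---it is cited from the reference \cite{kozma} as a known inequality, so there is no in-paper argument to compare against. Your approach, showing that $g(x)=\log(1+x)-\tfrac{2x}{2+x}$ vanishes at $x=0$ and has non-negative derivative, with the derivative comparison collapsing after cross-multiplication to $(2+x)^2 \ge 4(1+x)$, i.e.\ $x^2 \ge 0$, is the standard elementary proof of this Pad\'e-type lower bound, and the series route via $\ln\tfrac{1+t}{1-t} = 2\sum_{k\ge 0} t^{2k+1}/(2k+1) \ge 2t$ with $t=x/(2+x)$ is an equally clean alternative. You are also right to flag the logarithm convention: the fact is used in the paper's proof of Lemma~\ref{lem:chernoff2}, whose function $h$ (inherited from the Canonne reference) is really written with the natural logarithm, so the version actually needed is $\ln(1+z)\ge \tfrac{2z}{2+z}$. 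Your argument establishes exactly this, and the base-$2$ reading follows a fortiori from $\log_2 u \ge \ln u$ for $u\ge 1$.
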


\begin{lemma}[\cite{canonne}]
\label{lem:chernoff3}
Let $X \gets \pois(\lambda)$, for some $\lambda>0$. Then for any $x > 0$,
\[
\Pr[X \geq \lambda + x] \le e^{-\frac{x^2}{2\lambda} h\left(\frac{x}{\lambda}\right)}\;
\]
and for any $0 < x < \lambda$,
\[
\Pr[X  \leq \lambda - x] \le e^{-\frac{x^2}{2\lambda} h\left(-\frac{x}{\lambda}\right)}\;
\]
\end{lemma}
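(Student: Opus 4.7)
The plan is to prove both tail bounds using the standard Chernoff-style argument based on the moment generating function of a Poisson variable, then algebraically repackage the exponent using the function $h$. Recall that for $X\sim\pois(\lambda)$, the MGF is $\mathbb{E}[e^{tX}] = e^{\lambda(e^t-1)}$ for every $t\in\mathbb{R}$; this is the only analytic input I will need.

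For the upper tail, I would fix $t>0$, apply Markov's inequality to $e^{tX}$, and write
\[
\Pr[X\geq \lambda+x] \;\leq\; \frac{\mathbb{E}[e^{tX}]}{e^{t(\lambda+x)}} \;=\; \exp\bigl(\lambda(e^t-1)-t(\lambda+x)\bigr).
\]
Optimizing the exponent over $t>0$ gives $t^*=\ln(1+x/\lambda)$ (which is legitimate since $x>0$). Substituting yields
\[
\Pr[X\geq \lambda+x] \;\leq\; \exp\!\bigl(-\bigl[(\lambda+x)\ln(1+x/\lambda)-x\bigr]\bigr).
\]
Setting $z=x/\lambda$ and factoring a $\lambda$ out, the exponent becomes $-\lambda\bigl[(1+z)\ln(1+z)-z\bigr] = -\lambda\cdot\tfrac{z^2}{2}\,h(z) = -\tfrac{x^2}{2\lambda}\,h(x/\lambda)$, which is exactly the claimed bound.

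The lower tail is symmetric. I would apply Markov's inequality to $e^{-sX}$ for $s>0$:
\[
\Pr[X\leq \lambda-x] \;\leq\; \frac{\mathbb{E}[e^{-sX}]}{e^{-s(\lambda-x)}} \;=\; \exp\bigl(\lambda(e^{-s}-1)+s(\lambda-x)\bigr),
\]
then optimize to obtain $s^* = -\ln(1-x/\lambda)$, which requires $0<x<\lambda$ (exactly the hypothesis) so that the logarithm is real. Plugging in and rearranging, the exponent becomes $-\lambda\bigl[(1-x/\lambda)\ln(1-x/\lambda)+x/\lambda\bigr]$. Setting $z=-x/\lambda\in(-1,0)$, this is again of the form $-\lambda\bigl[(1+z)\ln(1+z)-z\bigr]=-\tfrac{x^2}{2\lambda}\,h(-x/\lambda)$, matching the stated bound. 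The identity $(1+z)\ln(1+z)-z = \tfrac{z^2}{2}h(z)$ on $[-1,\infty)$ is simply the definition of $h$, so no extra analytic work is required.

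There is really no serious obstacle here: the proof is just Chernoff plus one change of variables, and the function $h$ was introduced precisely to absorb the awkward expression $(1+z)\ln(1+z)-z$. The only point that warrants mild care is verifying that the optimizing $t^*$ and $s^*$ lie in the permitted ranges ($t^*>0$ follows from $x>0$; $s^*>0$ follows from $x\in(0,\lambda)$), and that the resulting stationary point is indeed a minimum of the exponent (the second derivative is $\lambda e^t>0$ in both cases). With those routine checks, both inequalities drop out.
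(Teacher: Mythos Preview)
Your argument is correct and is exactly the standard Cram\'er--Chernoff derivation for Poisson tails: apply Markov to $e^{tX}$ (resp.\ $e^{-sX}$), optimize using the Poisson MGF, and recognize the resulting exponent $(1+z)\ln(1+z)-z$ as $\tfrac{z^2}{2}h(z)$ by definition. The paper does not actually give its own proof of this lemma; it simply quotes it from Canonne's note \cite{canonne}, where the same MGF-based argument is used, so your approach coincides with the intended one.
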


\begin{proof}[Proof of \cref{lem:chernoff2}]
    To prove \cref{lem:chernoff2} we only need to prove that $h(z) \geq \frac{2}{2+z}$ for any $z \geq 0$. 
    This is equivalent to proving the following inequality:

    \begin{align*}
        &\frac{(1+z)\log(1+z) - z}{z^2} \cdot (2+z) \geq 1 \\
        \implies & \left((1+z)\log(1+z) - z\right)(2+z) - z^2 \geq 0 \\
        \implies & (1+z)(2+z)\log(1+z) - 2z(1+z) \geq 0 \\
        \implies & (2+z)\log(1+z) - 2z \geq 0 &&[\text{ followed by, $z \geq 0$}]\\
    \end{align*}
    The last inequality holds true based on \cref{fact:log}. Thus, the lemma is proven. 
\end{proof}

\end{document}